\def\<{\langle}\def\>{\rangle}
\def\Tr{\operatorname{Tr}}\def\:{\hbox{\bf
    :}}
\def\diag{\operatorname{diag}}
\def\N{\mathbb N}
\def\Z{\mathbb Z}
\def\set#1{\mathsf{#1}}
\def\GHZ{\mathrm{GHZ}}
\newcommand*{\YY}[1]{{\color{cyan} [YY: #1]}}
\newcommand*{\ZP}[1]{{\color{blue} [ZP: #1]}}
\DeclareMathOperator\tr{Tr}
\def\id{\mathbb{I}}%
\def\cov{\mathrm{Cov}}
\def\var{\mathrm{Var}}
\def\ot{\leftarrow}%
\newtheorem{theo}{{Theorem}}
\newtheorem{lemma}{{Lemma}}
\begin{document}
\title{Quantum-enhanced metrology with network states}
 
 \author{Yuxiang Yang}
 \affiliation{QICI Quantum Information and Computation Initiative, Department of Computer Science,
The University of Hong Kong, Pokfulam Road, Hong Kong, China}
 \author{Benjamin Yadin}
 \affiliation{Naturwissenschaftlich-Technische Fakult\"{a}t, Universit\"{a}t Siegen, Walter-Flex-Straße 3, 57068 Siegen, Germany}
 \author{Zhen-Peng Xu}
 \email{zhen-peng.xu@ahu.edu.cn}
 \affiliation{School of Physics and Optoelectronics Engineering, Anhui University, Hefei 230601, People's Republic of China}
\begin{abstract}
Armed with quantum correlations, quantum sensors in a network have shown the potential to outclass their classical counterparts in distributed sensing tasks such as clock synchronization and reference frame alignment. On the other hand, this analysis was done for simple and idealized networks, whereas the correlation shared within a practical quantum network, captured by the notion of network states, is much more complex.
Here, we prove a general bound that limits the performance of using quantum network states to estimate a global parameter, establishing the necessity of genuine multipartite entanglement for achieving a quantum advantage. The bound can also serve as an entanglement witness in networks and can be generalized to states generated by shallow circuits.
Moreover, while our bound prohibits local network states from achieving the Heisenberg limit, we design a probabilistic protocol that, once successful, attains this ultimate limit of quantum metrology and preserves the privacy of involved parties. Our work establishes both the limitation and the possibility of quantum metrology within quantum networks.
\end{abstract}
\maketitle

\noindent{\it Introduction.} Distributed sensing in a network is a general task of fundamental significance. Remarkably, with Greenberger-Horne-Zeilinger (GHZ) states over a network of $M$ parties, it is possible to estimate a global parameter $\theta$ with mean squared error $\Delta^2(\theta)\sim 1/M^2$, achieving the Heisenberg limit of quantum metrology --- with a $\Theta(M)$ reduction of error over protocols without entanglement \cite{giovannetti2004quantum}.
This fact lies behind the recent interest in  distributed quantum sensing
\cite{komar2014quantum,proctor2018multiparameter,ge2018distributed,eldredge2018optimal,zhuang2018distributed,qian2019heisenberg,sekatski2020optimal,zhang2021distributed,fadel2022multiparameter}
including several experimental demonstrations 
\cite{guo2020distributed,xia2020demonstration,liu2021distributed,zhao2021field} in both finite dimensional systems and continuous-variable systems.  

In practice, however, the distribution of global entangled states, e.g., multipartite GHZ states, is a daunting task~\cite{zhong2021deterministic,zhang2022quantum}. The decoherence time of multipartite entanglement leads to experimental limits in transmission and storage~\cite{carvalho2004decoherence}, especially for remote parties.
A feasible solution is to use quantum repeaters~\cite{briegel1998quantum} and consider distributed settings, i.e., quantum networks~\cite{kimble2008quantum,ladd2010quantum}.
Most often the generic resource states are \emph{network states}
~\cite{Navascues2020Genuine,aaberg2020semidefinite,luo2021new,kraft2021quantum,kraft2021characterizing,hansenne2022symmetries}, prepared by distributing few-partite entangled sources to different vertices and applying local operations according to a predefined protocol.
On the other hand, typical resources for quantum metrology like (multipartite) GHZ states and graph states ~\cite{audenaert2005entanglement,hein2006entanglement} may not be accessible in generic networks ~\cite{hansenne2022symmetries,makuta2022no,wang2022quantum}. 
Consequently, it is natural to ask how to characterize the potential of network states in sensing global parameters, and whether the Heisenberg limit can still be achieved. These critical questions, however, have been largely unexplored due to the much more general and complex nature of network states compared to GHZ states. 

In this work, we derive a versatile general upper bound on the precision of any deterministic protocol for estimating a (global) parameter using network states, which leads to sufficient conditions under which the  precision is bounded by the standard quantum limit (SQL) $\Delta^2(\theta)\sim 1/M$ (for $M$ parties) and cannot achieve the Heisenberg limit (HL) $\Delta^2(\theta)\sim 1/M^2$. We then design a probabilistic sensing protocol \cite{gendra2013quantum,gendra2014probabilistic,combes2014quantum,arvidsson2020quantum,lupu2022negative,xiong2022advantage,chiribella2013quantum,calsamiglia2016probabilistic} using local post-selection to achieve the HL, which also features the preservation of local parameters' privacy. 
 

\medskip

\noindent{\it Distributed sensing with network states.} 
A quantum network state $\rho$ is a multipartite quantum state, whose structure can be efficiently represented by a hypergraph $G(\mathcal{V},\mathcal{E})$ of $K(=|\mathcal{V}|)$ vertices and $|\mathcal{E}|$ hyperedges (i.e., subsets of $\mathcal{V}$). Each vertex represents a local site, and each hyperedge represents an entanglement source. 
The network state $\rho$ is generated via a two-step procedure, where each entanglement source, represented by a hyperedge $e$, distributes an entangled state to every local site in $e$ and then each local site $v$ applies an arbitrary local operation. 
Moreover, all sites can be classically correlated via a pre-shared global random variable $\lambda$.
In other words, a network state is of the generic form  
\begin{equation}\label{eq:nolosr}
    \rho = \sum_\lambda p_\lambda \rho_\lambda,\ \rho_\lambda = \left(\bigotimes_{v\in \mathcal{V}} \Phi_{v}^{(\lambda)} \right) \left(\bigotimes_{e\in \mathcal{E}} \sigma_{e}\right),
\end{equation}
where $\Phi_{v}^{(\lambda)}$ is a channel acting on sensor $v$, and $\sigma_e$ is an entangled state shared between sensors $v \in e$.

The goal of distributed sensing is for a group of far-apart sensors, each having access to an unknown local signal, to estimate one (usually global) parameter, e.g., the average of all local parameters \footnote{Note that in general there could be more than one parameter of interest, while in this work we focus on the fundamental case of estimating only one parameter.}. To accomplish this goal, the local sensors have access to a joint network state in each round of the experiment. 
In general, a \emph{deterministic protocol} consists of three phases:
$i)$ {\em Network state distribution:} Each entanglement source $e$ distributes an entangled state (e.g., Bell pairs or GHZ states) among its associated local sites (sensors) $\{v\in\mathcal{V}~|~v\in e\}$. Each sensor performs a local operation. Eventually, the sensors share a network state $\rho$.
$ii)$ {\em Signal acquisition:} The sensors obtain local signals. Explicitly, the state goes through a unitary evolution $U(\vec{\theta})=\exp\{-i\sum_{s\in\mathcal{S}}H_s\theta_s\}$, where $\{\theta_s\}$ are unknown parameters with generators $\{H_s\}$ and $\mathcal{S}$ consists of subsets of $\mathcal{V}$. We denote by $M:=|\mathcal{S}|$ the cardinality of $\mathcal{S}$. A generator $H_s$ acts trivially on a sensor $v$ if $v\not\in s$. The global state becomes $\rho(\{\theta_s\})$ after signal acquisition.
$iii)$ {\em Parameter estimation:} The parameter of interest is a function $f(\{\theta_s\})$ of $\{\theta_s\}$. The form of $f$ is known to all sensors. Depending on this function, a measurement is performed on the global state $\rho(\{\theta_s\})$ and an unbiased estimate $\hat{\theta}$ of $\theta$ is extracted from the measurement outcome statistics. Note that, in practice, the type of measurements that could be performed is often restricted (e.g., to be local). Here we show a stronger result (Theorem \ref{thm-generalbound}) that holds without any constraint on what kind of measurements may be performed.

For example, Figure~\ref{fig:cyclic} a) shows a network of $K=3$ sensors ($v_1,v_2,v_3$) with a shared bipartite entangled state ($\mathcal{E}=\{\{v_1,v_2\},\{v_1,v_3\},\{v_2,v_3\}\}$) between each pair of sensors. There are $M=3$ parameters, each collected by an individual sensor ($\mathcal{S}=\{\{v_1\},\{v_2\},\{v_3\}\}$).
\begin{figure}
\includegraphics[width=\linewidth]{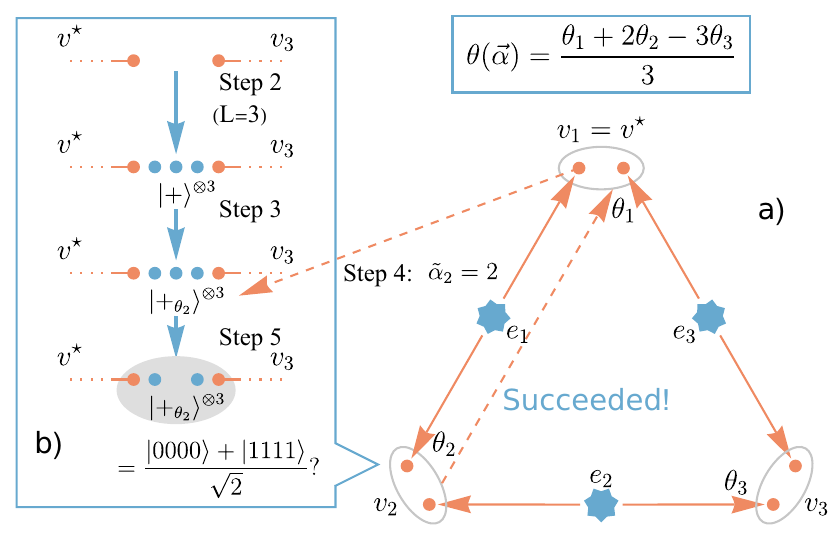}
\caption{{\bf a) A cyclic network consisting of $M=3$ sensors.} Here each pair of sensors share a Bell state $|B_0\>=\frac{|00\>+|11\>}{\sqrt{2}}$ via a source.  {\bf b) Steps 2-5 of Protocol \ref{protocol-general}} are illustrated for one of the non-center sensor, $v_2$.}\label{fig:cyclic}
\end{figure}

\medskip

\noindent{\it A general lower bound on the estimation error.}  Consider the task of estimating a global parameter that is an arbitrary linear combination of the local parameters $\theta(\vec{\alpha}):=\vec{\alpha}^T\vec{\theta}=\sum_s \alpha_s\theta_s$,
where $\vec{\alpha}$ is a vector of dimension $M$. \footnote{To estimate $f(\{\theta_s\})$ that is non-linear in the parameters, one can assume w.l.o.g. that the parameters lie in the vicinity of the true values (see, e.g., \cite[Section 9]{yang2019attaining}) and reduce to the linear case via Talyor-expansion.}
The mean squared error can be expressed as
\begin{align}\label{mse-theta}
\Delta^2\left(\theta(\vec{\alpha})\right)=\vec{\alpha}^T\cov\left(\{\hat{\theta}_s\}\right)\vec{\alpha},
\end{align}
where $\cov\left(\{\hat{\theta}_s\}\right)$ is the covariance matrix of the estimators $\{\hat{\theta}_s\}$ for $\{\theta_s\}$: $\left(\cov\right)_{ij}:=\mathbb{E}[(\hat{\theta}_i-\theta_i)(\hat{\theta}_j-\theta_j)]$ for $1\le i,j\le M$.

Our first main result is a general lower bound on the error of any deterministic protocol for an arbitrary network state. {Compared with works in distributed sensing \cite{komar2014quantum,proctor2018multiparameter,ge2018distributed,eldredge2018optimal,zhuang2018distributed,qian2019heisenberg,sekatski2020optimal,zhang2021distributed}, our bound captures not only the impact of the sensing task, but also the architecture of the network. More explicitly, we identify a key quantity in determining the error scaling, named the \emph{influence of the local signal $s$:} $k_s := \max_{e\in \mathcal{E}, e\cap s\neq \emptyset} |\{t \in \mathcal{S} |t\cap e\neq \emptyset\}|$. Intuitively, $k_s$ is the maximum number of local signals influenced by $s$ via an entanglement source within the network. Note that $k_s$ depends on both the sensing task and the network.
}
\begin{theo}\label{thm-generalbound}
When estimating a global parameter $\theta(\vec{\alpha})=\vec{\alpha}^T\vec{\theta}$ given a network state $\rho$ with structure $G(\mathcal{E},\mathcal{V})$,
the mean squared error of any deterministic protocol is lower bounded as:
\begin{align}\label{eq:variance_ineq}
   \Delta^2\left(\theta(\vec{\alpha})\right)\ge \sum_s \frac{\alpha_s^2}{4\nu k_s \var(\rho,H_s)}.
\end{align}
Here $H_s$ is the generator of $\theta_s$, $\var(\rho,H_s)$ is the variance of $H_s$ with respect to $\rho$, and $\nu$ is the number of rounds that the experiment is repeated.
\end{theo}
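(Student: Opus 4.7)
My plan is to prove the L\"owner-order matrix inequality $F(\rho)\le 4\,\diag\!\bigl(k_s\,\var(\rho,H_s)\bigr)$ and combine it with the multi-parameter quantum Cram\'er--Rao bound $\cov(\{\hat\theta_s\})\ge F(\rho)^{-1}/\nu$; inverting this bound and substituting into \eqref{mse-theta} immediately yields \eqref{eq:variance_ineq}.

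First, by convexity of the QFIM under the classical randomness $\lambda$ and concavity of the variance, it suffices to prove the diagonal bound for a single $\rho_\lambda=(\bigotimes_v\Phi_v)(\bigotimes_e\sigma_e)$. I purify each source state $\sigma_e$ with a reference $R_e$ and dilate each channel $\Phi_v$ via Stinespring to a unitary $U_v$ on an appended local ancilla $A_v$, yielding a pure state $|\Psi\>$ whose physical-sensor marginal is $\rho_\lambda$. Data processing of the QFI gives $F(\rho_\lambda)\le F(|\Psi\>)$, and unitarity of $U=\bigotimes_v U_v$ pulls each generator back to $H_s'=U^\dagger(H_s\otimes I)U$, supported on the shares from the ``extended'' sources $\tilde E_s=E_s\cup\{A_v:v\in s\}$, while preserving all variances. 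Crucially, $|\Psi\>$ is a tensor product $\bigotimes_e|\phi_e\>\otimes\bigotimes_v|0\>_{A_v}$ over these extended sources, and no ancilla inflates $k_s$: any $A_v$ with $v\in s$ touches only $\{t:v\in t\}$, of size at most $|S_e|\le k_s$ for any real $e\ni v$ already in $E_s$.

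The heart of the argument is the single-parameter variance inequality
\begin{equation*}
\var\!\left(|\Psi\>,\sum_s\beta_s H_s'\right)\le\sum_s k_s\,\beta_s^2\,\var(|\Psi\>,H_s')
\end{equation*}
on this source-product pure state. I plan to prove it via the operator Hoeffding (ANOVA) decomposition across sources: each $H_s'$ decomposes uniquely as $\sum_{S\subseteq\tilde E_s}(H_s')^S$, where $(H_s')^S$ is supported on the factors in $S$ and has zero partial expectation under $|\phi_{\tilde e}\>$ for every $\tilde e\in S$. The zero-mean property yields the orthogonality $\<(H_s')^S(H_t')^{S'}\>=0$ for $S\neq S'$, hence additivity $\var(\sum_s\beta_s H_s')=\sum_{\emptyset\neq S}\<(\sum_{s:S\subseteq\tilde E_s}\beta_s(H_s')^S)^2\>$. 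The operator Cauchy--Schwarz $(\sum_i A_i)^2\le n\sum_i A_i^2$ bounds each $S$-term by $n_S\sum_s\beta_s^2\<((H_s')^S)^2\>$ with $n_S:=|\{s':S\subseteq\tilde E_{s'}\}|$; finally the combinatorial inequality $n_S\le\min_{e\in S}|S_e|\le k_s$, valid for any nonempty $S\subseteq\tilde E_s$ by choosing any $e\in S$, collapses the double sum into $\sum_s k_s\beta_s^2\var(H_s')$.

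The hard step is precisely this last structural inequality: naive pairwise Cauchy--Schwarz only yields the closed signal-adjacency degree, which in general strictly exceeds $k_s$ (e.g.\ in the cyclic network of Fig.~\ref{fig:cyclic} a). The ANOVA decomposition is what makes $k_s$ the natural quantity: by replacing pairwise covariances with mutually orthogonal higher-order components indexed by subsets $S$ of sources, the count of signals constrained to contain all of $S$ is controlled by $|\bigcap_{e\in S}S_e|\le\min_{e\in S}|S_e|\le k_s$. Extending from pure $|\Psi\>$ back to mixed $\rho$ is then routine, combining $F(\rho)\le\sum_\lambda p_\lambda F(\rho_\lambda)$ by convexity with $\sum_\lambda p_\lambda\var(\rho_\lambda,H_s)\le\var(\rho,H_s)$ by concavity.
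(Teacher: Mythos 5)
Your argument is correct, and its skeleton coincides with the paper's: reduce to a single product state via convexity of the QFIM and concavity of the variance together with Stinespring dilation, decompose the (co)variance across the tensor factors of that product state, apply a per-factor Cauchy--Schwarz step that produces $k_s$, and finish with the matrix Cram\'er--Rao bound and operator-antitone inversion. The genuinely different ingredient is the decomposition lemma at the core. The paper's Lemma~\ref{thm:separable} is a sequential, martingale-type decomposition $\cov(\rho,\mathcal{A})=\sum_k\Upsilon^{(k)}$ into one positive semidefinite matrix per tensor factor, with $\Upsilon^{(k)}_{ij}=0$ whenever $A_i$ or $A_j$ acts trivially on factor $k$; the vertex blocks are then absorbed into source blocks $T^{(e)}$, and the sum of all entries of each $T^{(e)}$ is bounded by $k^{(e)}\leq|\{t:t\cap e\neq\emptyset\}|$ times its trace via AM--GM. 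You instead use the finer Hoeffding/ANOVA decomposition indexed by \emph{subsets} $S$ of factors, whose mutual orthogonality under the product state reduces everything to the scalar inequality $\var(|\Psi\>,\sum_s\beta_sH_s')\le\sum_sk_s\beta_s^2\var(|\Psi\>,H_s')$, at the price of the extra combinatorial step $n_S\le\min_{e\in S}|\{t:t\cap e\neq\emptyset\}|\le k_s$. Both routes land on exactly the same diagonal QFIM bound; yours trades the matrix bookkeeping of $\{T^{(e)}\}$ for subset bookkeeping and correctly diagnoses why a naive pairwise covariance bound would only yield the (generally larger) closed signal-adjacency degree rather than $k_s$. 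One shared caveat, not a flaw specific to you: both proofs implicitly require every sensed vertex to lie in at least one hyperedge --- the paper needs $c_v>0$ to absorb $\Upsilon^{(v)}$ into some $T^{(e)}$, and your bound on the ancilla factors $A_v$ needs an edge $e\ni v$ with $e\cap s\neq\emptyset$ to exist.
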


In principle, the bound (\ref{eq:variance_ineq}) can be extended to the more general multiparameter case where the cost function is of the form $\Tr(W\cov(\{\hat{\theta}_s\}))$ for some weight matrix $W\ge0$. In fact, we prove the bound by combining the (matrix) Cram\'er-Rao bound \cite{helstrom1969quantum,holevo2011probabilistic} $\cov(\{\hat{\theta}_s\})\ge(1/\nu)\mathcal{F}_Q^{-1}$ with the following bound on the quantum Fisher information (QFI) matrix (see Appendix for its explicit definition) for the parameters $\{\theta_s\}$:
\begin{align}\label{QFI-bound}
   \mathcal{F}_Q(\rho,\{H_s\}) \le \diag \left\{4 k_s \var(\rho, H_s) \right\}_s,
\end{align}
where $\rho$ denotes the network state, and $\mathcal{F}_Q(\rho,\{H_s\})$ denotes the quantum Fisher information matrix of the state   $\rho_{\vec{\theta}}:=U(\vec{\theta})\rho U(\vec{\theta})^\dag$ with $U(\vec{\theta}):=e^{-i\sum_s H_s\theta_s}$. Its proof can be found in the Appendix. 

We now check how the error scales with respect to $M$, the number of local parameters. In Eq.~(\ref{eq:variance_ineq}), the variance $\var(\rho,H_s)$ can be bounded as $\var(\rho,H_s)\le\|H_s\|^2\le h_{\max}^2$ (with $\|\cdot\|$ being the operator norm), where $h_{\max}:=\max_s\|H_s\|$ is $M$-independent. 
For estimating the mean of $\{\theta_s\}$, we have $\vec{\alpha}=(1/M,\dots,1/M)^T$, and thus $\vec{\alpha}^T\vec{\alpha}\sim 1/M$. 
Further, denoting by $k_{\max}$ the maximum of the influence $k_s$, Eq.~(\ref{eq:variance_ineq}) implies
\begin{align}\label{eq:reduced_bound}
   \Delta^2\left(\theta(\vec{\alpha})\right)= \Omega\left(\frac{1}{M\cdot k_{\max}\cdot h_{\max}^2}\right).
\end{align}

A key implication of our result is that genuine $M$-partite {\emph{network}} entanglement is necessary for achieving the HL in a network. 
{When $\Delta^2\sim 1/M^2$, Eq.~(\ref{eq:reduced_bound}) requires $k_{\max}$ to scale with $M$.}
As long as $s\cap s'=\emptyset$ for any pair of local signals, $k_{\max}$ is upper bounded by $\max_{e\in\mathcal{E}}|e|$, which captures the range of genuine entanglement in the network \cite{Navascues2020Genuine}. 
{Our result thus establishes a crucial connection between this core property of a generic network and quantum metrology, requiring it to scale with $M$ to achieve the HL.
Since network entanglement is a stronger and more natural resource for network scenarios than multipartite entanglement \cite{Navascues2020Genuine}, our result extends the main result of Ref.~\cite{ehrenberg2023minimum}, where the necessity of $M$-partite entanglement was established.}
Our bound also shows that local pre-processing cannot be used to gain an advantage in metrology with shared entangled states between limited numbers of parties.

{Furthermore, it is revealed by our bound that not only the amount of entanglement but also the \emph{architecture of the network} is important, when analysing each influence $k_s$ instead of $k_{\max}$. As an example, consider the task shown in Fig.~\ref{fig:sun}, where the state processes an amount of genuine network entanglement that scales with $M$ but fails to achieve the HL. A sufficient condition for the bound (\ref{eq:reduced_bound}) to attain the HL is to have all vertices covered by $M$-hyperedges. Constructing a sensing protocol for such network states is an interesting  direction for  future work.}

\begin{figure}[bt]
\includegraphics[width=0.8\linewidth]{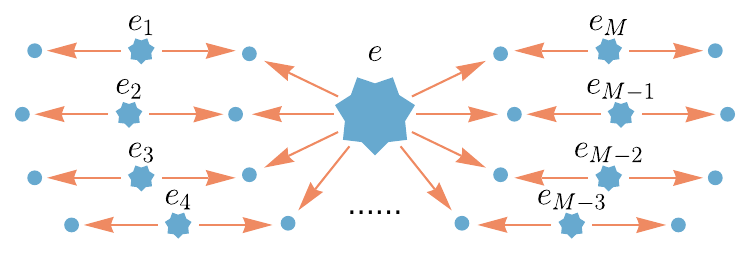}
\caption{{\bf Genuine network entanglement of order $M$ is not sufficient for Heisenberg scaling.} Consider a task of estimating the average of $2M$ local parameters in a network, which consists of a hyperedge $e$ with cardinality $M$ and $M$ edges $e_1,\dots,e_M$. The network state has genuine $M$-partite network entanglement, but the squared error scales as $1/M$ by Eq.~(\ref{eq:reduced_bound}) since there are $M$ vertices with $k_s=2$.}\label{fig:sun}
\end{figure}

\medskip

\noindent{\it Error bound as an entanglement witness.}
The bound (\ref{QFI-bound}) on $\mathcal{F}_Q$ (and, equivalently, Theorem \ref{thm-generalbound}) can also be used as a witness of genuine  multipartite network entanglement~\cite{Navascues2020Genuine}.
Previous works have shown that QFI with respect to a non-interacting Hamiltonian can detect genuine $k$-body entanglement, which can not be produced by probabilistic mixing of pure states in which no more than $(k-1)$ parties are entangled~\cite{Toth2012Multipartite,Hyllus2012Fisher}, and that an interacting Hamiltonian can be used to rule out fully separable states~\cite{Pezze2015Witnessing}.
Our result extends both types of results to cover genuine network $k$-entanglement~\cite{Navascues2020Genuine}, defined as entanglement that cannot be generated in a network in which sources are distributed to at most $k-1$ parties at a time.
Due to the entanglement distribution and the inclusion of the parties' local channels, witnessing genuine network $k$-entanglement is stronger than witnessing genuine $k$-body entanglement. 

For example, consider a one-dimensional spin model with nearest-neighbor coupling: $H = \sum_{i=1}^M H_i$ such that each $H_i$ acts nontrivially only on sites $i$ and $i+1$.
In terms of a network, each vertex is a site $i$ and each $s$ is a pair $(i,i+1)$.
If the state can be prepared from $r$-partite sources, then it is easily seen that $k_s \leq 2r$, where the upper bound is achieved when the particles from the same source do not interact with each other. 
For such a state $\rho$, Theorem \ref{thm-generalbound} yields
\begin{align} 
   \Delta^2\left(\theta(\vec{\alpha})\right)\ge \sum_{i=1}^M \frac{\alpha_i^2}{8\nu r  \var(\rho,H_i)}.
\end{align}

If the model is translationally invariant and $\alpha_i = 1/M$ for every $i$ (i.e., to estimate the average $\bar{\theta}$), the above bound is reduced to
$\Delta^2\left(\bar{\theta}\right)\ge 1/(8\nu r M \var(\rho,H_1))$.
More generally, for a model with bipartite interactions and $\tau$ nearest neighbors per site, we see that $\Delta^2\left(\bar{\theta}\right)\ge 1/[4 \nu \tau r M \var(\rho,H_1)]$; 
for instance, a $d$-dimensional cubic lattice has $\tau = 2^d$. 
The case of an Ising model was studied in Ref.~\cite{Pezze2015Witnessing} for $r=1$, taking $H_i = \frac{1}{2}Z_i + \frac{\epsilon}{4} Z_i Z_{i+1}$.
By optimizing over pure fully separable states, they find the upper bound on the QFI to be $M(1 + \frac{5 \epsilon^2}{4})$ for $\epsilon \ll 1$ and $M(\frac{1}{2} + \epsilon + \frac{\epsilon^2}{2})$ for $\epsilon > \epsilon_c \approx 0.7302$.
Our bound gives $M(2+2\epsilon + \frac{\epsilon^2}{2})$, which is less tight, but has the advantage of being easily extended to $r>1$ without the need to search for optimal states.

\medskip

\noindent{\it Precision bounds for shallow circuits.}
Our technique can also be applied to scenarios where an entangled probe state is prepared from a circuit with local gates of a finite depth.
We first consider an unentangled (i.e., fully separable) state $\rho$ input into a circuit composed of $l$-local gates.
This has depth $D$, meaning that there are $D$ layers such that the gates are non-overlapping within each layer.
The gates have corresponding unitaries $U_{j,\alpha}$, where $j$ labels the layer and $\alpha$ the index within the layer.
The unitary for the full circuit is $U = (\otimes_{\alpha_D} U_{d,\alpha_D}) \dots (\otimes_{\alpha_i} U_{1,\alpha_1})$.
The output from the circuit $\sigma = U\rho U^\dagger$ is used as a probe for sensing rotations generated by a $p$-local Hamiltonian $H = \sum_{i=1}^M H_i$ where each $H_i$ acts locally.
Then the QFI is $\mathcal{F}_Q(\sigma,H) = \mathcal{F}_Q(\rho, U^\dagger H U)$.
The locality $q$ of the transformed Hamiltonian $U^\dagger H U$ can be bounded using the depth $D$.
We can give different bounds depending on the structure of the circuit.
This follows from a light cone argument as described in Ref.~\cite{Yu2023Learning}.

With no particular geometry, the weight of each Hamiltonian term (i.e., number of subsystems acted upon) increases by a factor of at most $l$ under the application of each layer.
Therefore, we have $q \leq l^D p$. A more useful bound cannot be obtained unless we know the circuit structure. 
If, for example, the circuit forms a one-dimensional chain with $l=2$ and $H$ being 1-local, then we see that $U^\dagger H U$ has terms interacting $q \leq 2D +1$ neighbors.
It follows that $\mathcal{F}_Q(\sigma,H) \leq 4 q \sum_{i=1}^M \var(\rho, U^\dagger H_i U) \leq 4(2D+1) \sum_{i=1}^M \var(\rho, U^\dagger H_i U)$.
The HL can therefore only be approached if $D \sim M$.
A two-dimensional square lattice works similarly, replacing $q \leq D^2 + (D+1)^2$ by counting the number of lattice points reachable from a given starting point in $D$ unit steps.
Hence, we now only need a shallow circuit with $D \sim \sqrt{M}$ to get the HL.
In general, with local geometry, we would require $D$ to be similar to the size of the system. 
Using this idea, we can also bound the QFI when the parameter of interest is embedded in the circuit $U$ (see Appendix for details).

\medskip

\noindent{\it Achieving the Heisenberg limit via post-selection.} 
Next we introduce a concrete protocol.
Consider a network state with a hypergraph topology $G(\mathcal{V},\mathcal{E})$,  where each $e\in \mathcal{E}$ represents a GHZ state shared by all sensors $v\in e$.
Each sensor $v_j$ can locally access a signal $\theta_j$, which is a phase gate $e^{-i\theta_jZ_j/2}$ with $Z_j$ being the Pauli-$Z$ operator located at the $j$-th sensor's place. That is, $\mathcal{S}$ is the collection of all singletons and $|\mathcal{S}|=|V|$. 
The task is for a center, which could be any one of the sensors, to estimate an arbitrary combination of parameters
$\theta(\vec{\alpha}):=\frac{1}{M}\sum_{j=1}^M \tilde{\alpha}_j\theta_j$ 
with the assistance of all sensors, where with every $\tilde{\alpha}_j\in\Z\setminus \{0\}$ and $|\tilde{\alpha}_j|$ is upper bounded by $L$ for some known constant $L\in\N^*$. 

Besides the precision of estimation, we require the parameter estimation to respect the privacy of involved parties. Concretely, the center requires no other party to obtain the full knowledge of $\vec{\alpha}$. On the other hand, each sensor wishes to keep its local value $\theta_j$ private while still assisting the center to estimate $\theta(\vec{\alpha})$.

The standard deterministic protocol of measuring locally and communicating the outcomes to the center fails the second requirement unless the network state is of specific global topology (e.g., being a large GHZ state spanning all vertices) \footnote{The privacy of the local parameters may be preserved by a modification of the deterministic protocol: The $j$-th sensor first estimates $\theta_j$ and then sends a protected phase $\Theta_j=\theta_j+v_j$ to the center with some randomly generated value $v_j$. Provided $\vec{v}^T\vec{\alpha}=0$, the center can effectively learn $\theta(\vec{\alpha})=\vec{\alpha}^T\vec{\Theta}$ without knowing $\vec{\theta}$. However, to distribute to each sensor the desired $v_j$, it requires a trusted third party who learns $\vec{\alpha}$, violating the first privacy requirement.}.
We present a probabilistic protocol that achieves both privacy requirements and, as a bonus, attains HL when successful, with almost no restriction on the network topology.
Probabilistic protocols are prevalent in quantum sensing \cite{mitchell2004super,walther2004broglie,nagata2007beating} and quantum information processing \cite{gendra2013quantum,gendra2014probabilistic,combes2014quantum,arvidsson2020quantum,lupu2022negative,xiong2022advantage,chiribella2013quantum,calsamiglia2016probabilistic}. 
For metrology, it has been shown to boost the precision to the HL \cite{gendra2013quantum} and beyond \cite{arvidsson2020quantum}. Whether such an appealing feature persists under the constraint of quantum networks, however, remains largely unexplored. 
The protocol runs as described by Protocol \ref{protocol-general}.
Note that, if $\vec{\alpha}$ is known \textit{a priori}, we need assume only that the sensors have universal local control and the classical communication of measurement outcomes to the center can be delayed until the end. Therefore the sensors also do not need a quantum memory. However, if the ``task allocation'' step is included, the choice of $\vec{\alpha}$ is communicated to the sensors before the measurement step.

\begin{algorithm}[h]
  \caption{Probabilistic metrology of a global parameter over a network state.}
  \label{protocol-general}
   \begin{algorithmic}[1]
    \State (State preparation.) Each source $e$ prepares a $|e|$-qubit GHZ state and distributes it among $v\in e$.
    \State (Center election and local pre-processing.) An arbitrary sensor $v^\ast$ is selected as the center. Each of the other sensors locally prepares $L$ copies of the plus state $|+\>:=(1/\sqrt{2})(|0\>+|1\>)$.
    \State (Signal acquisition.) Each $v_j\not=v^\ast$ passes each plus states once through the signal $\theta_j$. The resultant state is (up to an irrelevant global phase) $|+_{\theta_j}\>^{\otimes L}$ with $|+_{\theta_j}\>:=(1/\sqrt{2})(|0\>+e^{i\theta_j}|1\>)$.
    \State (Task allocation.) The center sends the weight $\tilde{\alpha}_j$ to the $j$-th vertex.
    \State ({Local} measurement and post-selection.) For each sensor $v_j$ (including $v^\ast$), it first discards $L-|\tilde{\alpha}_j|$ plus states and keeps the state $|+_{\theta_j}\>^{\otimes\tilde{\alpha}_j}$.  
    Next, if $\tilde{\alpha}_j<0$, the sensor performs $X$ on each qubits of the plus states. If $v_j=v^\ast$ in this case, it performs $X$ on the single qubit that acquired the signal.
    Finally, the sensor performs a binary measurement $\{|\GHZ_j\>\<\GHZ_j|,I-|\GHZ_j\>\<\GHZ_j|\}$, where $|\GHZ_j\>$ denotes the GHZ state of all its local qubits (including the plus states with signals and the qubits from the sources).  For every $v_j\not=v^\ast$, it declares success if the first outcome (i.e., $|\GHZ_j\>\<\GHZ_j|$) is obtained.
    \State (Estimation.) Conditioning on all other sensors declaring success, the probability that $v^\ast$ yields $|\GHZ^*\>\<\GHZ^*|$ is a function of $\theta$, from which an unbiased estimate $\hat{\theta}$ on $\theta(\vec{\alpha})$ can be obtained.  
    \end{algorithmic}
\end{algorithm} 

For Protocol \ref{protocol-general}, it is obvious that the first privacy requirement is fulfilled, as each sensor learns only a single entry $\tilde{\alpha}_j$. Note that in the (most interesting) scenario of a large network, it can be extremely difficult for the sensors to conspire and communicate the entire $\vec{\alpha}$.
In addition, the probability that an arbitrary clique of sensors succeeds (and other sensors fail) is independent of the values of $\{\theta_i\}$. The conditional state at the center also depends only on $\theta(\vec{\alpha})$ (see Appendix for the proof). As a consequence, no additional information on the local parameters $\{\theta_i\}$ is leaked to the center whether or not the protocol is successful. The second requirement of privacy is thus fulfilled.

As for the precision, Protocol \ref{protocol-general} 
makes $ML$ queries in total to the local signal sources for each run (when estimating the average, in particular, we have $L=1$). When $\{\tilde{\alpha}_j\}$ are (non-integer) rational numbers, one may still apply the protocol with $\tilde{\alpha}_j$ replaced by $\tilde{L}\tilde{\alpha}_j$, where $\tilde{L}$ is the smallest positive integer such that $\{\tilde{L}\tilde{\alpha}_j\}$ are all integers. The queries per run becomes $M\tilde{L}\max_j\tilde{\alpha}_j$. For a generic irrational $\tilde{\alpha}_j$, one may round it to a close rational number prior to applying the protocol. 
The protocol achieves the HL (i.e., QFI$\sim M^2$) as long as $G$ remains connected after removing the center (i.e., when $v^\ast$ is not a cut-vertex). The proof can be found in Appendix.

With this, we see that {combining local resources and post-selection allows us to achieve the optimal sensing precision up to a constant.}
In contrast, deterministic protocols are bound by the standard quantum limit (i.e., QFI $\sim M$) by Theorem \ref{thm-generalbound}. Note that the few-partite entanglement in network states plays an essential role in achieving the (probabilistic) HL, as a completely local state remains local after post-selection.
As a concrete example, one may consider $G$ being a cyclic network of $M$ vertices, where each vertex (sensor) $v_j$ holds a local signal with parameter $\theta_j$ (see Figure \ref{fig:cyclic} a)), and each edge represents one Bell state
$|B_0\>:=\frac{1}{\sqrt{2}}(|00\>+|11\>)$. 
This state is obviously local, and by Theorem \ref{thm-generalbound} the performance of any deterministic protocol is bounded by the standard quantum limit. On the other hand, Protocol \ref{protocol-general}, when applied to this cyclic network (see Figure \ref{fig:cyclic} b) for an illustration), achieves the HL. 

The overall success probability of Protocol \ref{protocol-general} can be 
  lower bounded as
(see Appendix for details)
\begin{equation}
    \log_2 p_{{\rm succ}}  \ge  -\Big[\sum_{v\neq v^\ast} |\tilde{\alpha}_v| + \sum_{e\in \mathcal{E}} |e| - |\mathcal{E}(v^\ast)| \Big],
\end{equation}
with $\mathcal{E}(v^\ast)$ being the set of edges containing $v^\ast$ and $|e|$ being the number of vertices contained in $e$.
The probability $p_{{\rm succ}}$ vanishes with increasing network size.  The advantage thus vanishes if one takes into account the failed cases, which is a general limitation of post-selected metrology \cite{combes2014quantum} rather than a defect of Protocol \ref{protocol-general}.

Protocol \ref{protocol-general} also features other interesting advantages. One is that the sensors could decide which $\theta(\vec{\alpha})$ to estimate after each local signal is applied. Via post-selection, they could ``steer" the state to maximize the precision of $\theta(\vec{\alpha})$ for one particular configuration of $\vec{\alpha}$. Whenever needed, the election of the center can even be delayed to after the acquisition of the signal by a minor modification of the protocol (i.e., to let every sensor prepare $\frac{|0\>+|1\>}{\sqrt{2}}$ states to host the signal). We emphasize that no extra classical communication (at the ``task allocation" step) is needed if  $\vec{\alpha}$ is known \textit{a priori}. Therefore, the extra classical communication is not the reason for the precision enhancement.

Existing works of cryptographic quantum metrology required global entanglement (e.g., GHZ states spanning the network) \cite{yin2020experimental,shettell2022cryptographic,shettell2022quantum,shettell2022private,kasai2023anonymous}, while our work shows the possibility of achieving desired privacy when global entanglement is not accessible. It should also be noted that our discussion does not cover the verification of probe states, which can be discussed independently.

\medskip

\noindent{\it Conclusion.} In this work, we explored both the limitation (by deriving a general bound) and the potential (by designing a probabilistic protocol) of network states in metrology. 
Our work opens a new line of research and hints at more future possibilities along it. For example, it is an interesting open question whether the protocol can be extended to the case of reference frame alignment. The entanglement of the network state and its usefulness for metrology can be enhanced by LOCC, which depends on the quality of quantum memory. Since the range of entanglement affects the success probability, the relation between the quality of quantum memory and the success probability is also desirable to investigate as the technology develops.

\begin{acknowledgments}
All authors contributed equally.
We thank
Otfried G\"{u}hne, Kiara Hansenne,
Jiaxuan Liu and Sixia Yu
for discussions.
This work was supported 
by Guangdong Basic and Applied Basic Research Foundation (Project No.~2022A1515010340), the Hong Kong Research Grant Council (RGC) through the Early Career Scheme (ECS) grant 27310822 and the General Research Fund (GRF) grant 17303923,
the Deutsche Forschungsgemeinschaft (DFG, German 
Research Foundation, No. 447948357 
and 440958198), the Sino-German Center for 
Research Promotion (Project M-0294), the ERC 
(Consolidator Grant No. 683107/TempoQ) and the 
German Ministry of Education, Research 
(Project QuKuK, BMBF Grant No. 16KIS1618K).
This project has received funding from the European Union’s Horizon 2020 research and innovation programme under the Marie Sk\l{}odowska-Curie grant agreement No.~945422. 
Z.P.X. 
acknowledges support from the Alexander von 
Humboldt Foundation, {National Natural Science Foundation of China} (Grant No.\ 12305007) and 
Anhui Provincial Natural Science Foundation (Grant No.\ 2308085QA29).
\end{acknowledgments}
\bibliographystyle{apsrev4-2}
\bibliography{ref}

\newpage
\appendix

\onecolumngrid

\section{Deterministic distributed sensing with quantum network states}

The performance of a sensing protocol can be characterized by the MSE (mean squared error) $\Delta^2(\theta)$.
In the context of distributed sensing, the Heisenberg limit (HL) refers to the following scaling of the MSE:
\begin{align}
\Delta^2(\theta)\sim M^{-2}
\end{align}
which features a linear improvement over the standard quantum limit $(\Delta\theta)^2\sim M^{-1}$, where $M$ is the number of distinct signals.
A core question of distributed quantum sensing is when is the HL achieved.
In a similar fashion as \cite[Appendix A]{fadel2022multiparameter}, we prove a general lower bound on the error of any deterministic protocol, which implies that any protocol that does not involve genuine multipartite entanglement is dominated by the standard quantum limit.

For a given state $\rho$ and a set of dichotomic observables $\mathcal{A} = \{A_i\}$, denote by $\cov(\rho,\mathcal{A})$ the corresponding covariance matrix, defined as
\begin{equation}
\left[\cov(\rho,\mathcal{A})\right]_{ij}:=\tr(A_iA_j\rho) - \tr(A_i\rho)\tr(A_j\rho).
\end{equation}
For product states, their covariance matrices can be cast into a sum form as stated in the following lemma~\cite{aaberg2020semidefinite} (see the end of this section for the proof): 
\begin{lemma}\label{thm:separable}
For any $K$-partite product state $\rho = \otimes_{t=1}^K \sigma_t$, and an arbitrary set of observables $\mathcal{A} = \{A_i\}_{i=1}^{S}$, the covariance matrix can be expressed as 
\begin{align}\label{eq:covdec}
 & \cov(\rho,\mathcal{A}) = \sum_{k=1}^K \Upsilon^{(k)},
\end{align}
where $\{\Upsilon^{(k)}\}_{k=1}^K$ is a set of positive semi-definite matrices of dimension $S$.
Moreover, for any $1\le i,j\le S$, if either $A_i$ or $A_j$ acts trivially on the $k$-th subsystem,  we have $\Upsilon^{(k)}_{i,j}=0$.
\end{lemma}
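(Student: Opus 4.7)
The plan is to construct the decomposition by a telescoping sum along a fixed ordering $1,\dots,K$ of the subsystems. Define the partial expectation
\[
\mathbb{E}_k[A] := \tr_{k+1,\dots,K}\bigl(A\,(I_{\leq k}\otimes\sigma_{k+1}\otimes\cdots\otimes\sigma_K)\bigr),
\]
an operator on the first $k$ subsystems, and extend it to the full Hilbert space via $\hat{\mathbb{E}}_k[A] := \mathbb{E}_k[A]\otimes I_{>k}$, so that $\hat{\mathbb{E}}_0[A]=\tr(A\rho)\,I$ and $\hat{\mathbb{E}}_K[A]=A$. The telescoping identity
\[
A_i - \tr(A_i\rho)\,I \;=\; \sum_{k=1}^K A_i^{(k)},\qquad A_i^{(k)} := \hat{\mathbb{E}}_k[A_i] - \hat{\mathbb{E}}_{k-1}[A_i],
\]
then suggests the definition $\Upsilon^{(k)}_{ij} := \tr\bigl(A_i^{(k)}\,A_j^{(k)}\,\rho\bigr)$.

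I would then verify the three required properties in turn. First, expanding $[\cov(\rho,\mathcal{A})]_{ij} = \tr\bigl((A_i-\tr(A_i\rho)I)(A_j-\tr(A_j\rho)I)\rho\bigr)$ via the telescope gives $\sum_{k,l}\tr(A_i^{(k)}A_j^{(l)}\rho)$, and I claim the cross-terms $k\neq l$ vanish thanks to the martingale-type identity $\tr_l\bigl(A_j^{(l)}(I\otimes\sigma_l)\bigr)=0$, which follows directly from $\mathbb{E}_{l-1}[A_j] = \tr_l\bigl(\mathbb{E}_l[A_j](I\otimes\sigma_l)\bigr)$. Second, positive semidefiniteness follows immediately since, for any $\vec{v}\in\mathbb{C}^S$ and $C_k := \sum_i v_i A_i^{(k)}$, one has $\sum_{ij}\bar{v}_i v_j\,\Upsilon^{(k)}_{ij} = \tr(C_k^\dagger C_k\rho)\geq 0$. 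Third, for the trivial-action property, if $A_i = I_k\otimes B_i$ then expanding $B_i = \sum_\beta L_\beta\otimes R_\beta$ across the bipartition of the remaining subsystems at position $k$ shows $\hat{\mathbb{E}}_k[A_i] = \hat{\mathbb{E}}_{k-1}[A_i] = \bigl(\sum_\beta \tr(R_\beta\sigma_{>k})\,L_\beta\bigr)\otimes I_{\geq k}$, hence $A_i^{(k)}=0$ and $\Upsilon^{(k)}_{ij}=0$ for every $j$.

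The main obstacle is the careful bookkeeping of partial traces in the non-commutative setting when establishing the cross-term vanishing. For $k<l$, the argument proceeds by noting that $A_i^{(k)}$ acts as identity on every subsystem $>k$, in particular on subsystem $l$, and then tracing subsystem $l$ first. Writing $A_j^{(l)} = \sum_\alpha P_\alpha\otimes Q_\alpha$ with $P_\alpha$ on subsystems $<l$ and $Q_\alpha$ on subsystem $l$, the partial trace over subsystem $l$ contributes the factor $\sum_\alpha P_\alpha\,\tr(Q_\alpha\sigma_l)$, which equals $\tr_l\bigl(A_j^{(l)}(I\otimes\sigma_l)\bigr)=0$, so the remaining trace over subsystems $\leq l-1$ is trivially zero. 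The case $l<k$ is analogous with the roles of $k$ and $l$ swapped. Although the construction depends on an arbitrary ordering of the subsystems, the decomposition and its vanishing property hold for any such choice, which is all the lemma requires.
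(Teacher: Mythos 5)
Your proposal is correct and is essentially the same argument as the paper's: both construct the $\Upsilon^{(k)}$ as Gram matrices of martingale-type increments $\hat{\mathbb{E}}_k[A_i]-\hat{\mathbb{E}}_{k-1}[A_i]$ obtained by successively averaging out subsystems of the product state, with positivity from the Gram structure, the sum property from telescoping/orthogonality of increments, and the vanishing property from the increment being zero when $A_i$ acts trivially on subsystem $k$. The only differences are cosmetic (you condition on the first $k$ subsystems whereas the paper averages them out, and you verify the sum by killing cross terms in a double sum rather than telescoping the diagonal, which amounts to the same computation).
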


We now proceed to bound the performance of distributed sensing within a network. Given a network $G(\mathcal{V},\mathcal{E})$, any sensing protocol consists of generating a probe state $\rho$, acquiring signals, and generating an estimate.
The probe state $\rho$ is generated from the original network state via local operations.
We consider the general setting with multiple signals $\{\theta_s\}_{s\in\mathcal{S}}$.
Each signal is imprinted on $\rho$ via a unitary $e^{-i\theta_s H_s}$, where the generator $H_s$ acts non-trivially on vertices in the subset $s\subset\mathcal{E}$. 
Finally, the probe state is measured to generate an estimate, whose accuracy is constrained by the quantum Fisher information matrix  $\mathcal{F}_Q(\rho, \{H_s\})$  of the state   $\rho_{\vec{\theta}}:=U(\vec{\theta})\rho U(\vec{\theta})^\dag$ with $U(\vec{\theta}):=e^{-i\sum_s H_s\theta_s}$ (cf.~Ref.~\cite{liu2020quantum}). Concretely, the matrix element $\left(\mathcal{F}_Q(\rho,\{H_s\})\right)_{ss'}$ is defined by $\left(\mathcal{F}_Q\right)_{ss'}:=\Tr[\rho_{\vec{\theta}}(L_{s}L_{s'}+L_{s'}L_s)/2]$, with $\{L_s\}$ being the symmetric logarithmic derivatives obtained by solving $\partial\rho_{\vec{\theta}}/\partial \theta_s=(L_s\rho_{\vec{\theta}}+\rho_{\vec{\theta}}L_s)/2$.

With Lemma~\ref{thm:separable} we show a crucial feature of $\mathcal{F}_Q(\rho, \{H_s\})$:
\begin{theo}\label{thm:fisherdec}
There exists a set of $|\mathcal{S}|$-dimensional square matrices $\{T^{(e)}\}_{e\in \mathcal{E}}$ (whose indices are elements in $\mathcal{S}$) such that
\begin{align}\label{eq:sdpdec}
    &\sum_{e\in  \mathcal{E}} T^{(e)} \ge \mathcal{F}_Q(\rho, \{H_s\})/4,\nonumber\\
    &\sum_{e\in \mathcal{E}} T^{(e)}_{ss} = \var(\rho,H_s),\nonumber\\
    &\Pi^{(e)} T^{(e)} \Pi^{(e)} = T^{(e)},\  T^{(e)} \ge 0,\  \forall e\in \mathcal{E},
\end{align}
where $\var(\rho, H_s)$ is the variance of $H_s$ with respect to  $\rho$, $\Pi^{(e)} = \sum_{s\cap e \neq \emptyset} P^{(s)}$, and $P^{(s)}$ is the projector onto the block for the subset  $s$.
\end{theo}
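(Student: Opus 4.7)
The plan is to bound the QFI matrix of the network state $\rho$ by the covariance matrix of a genuine \emph{product} state, so that Lemma~\ref{thm:separable} applies directly. Two features of the construction obstruct this. First, $\rho=\sum_\lambda p_\lambda\rho_\lambda$ is only a convex mixture, which will be absorbed using convexity of the QFI. Second, and more importantly, even each branch $\rho_\lambda=(\bigotimes_v\Phi_v^{(\lambda)})(\bigotimes_e\sigma_e)$ fails to be a product across hyperedges, because the local channel $\Phi_v^{(\lambda)}$ can correlate subsystems coming from different sources that share vertex $v$. I plan to restore the product structure at the price of enlarging the Hilbert space: (i) purify each source, $\sigma_e\mapsto|\psi_e\rangle$, with ancilla $\mathcal{H}_{P_e}$, and (ii) Stinespring-dilate each $\Phi_v^{(\lambda)}$ to a unitary $U_{\Phi_v^{(\lambda)}}$ on $\mathcal{H}_v\otimes\mathcal{H}_{E_v}$ whose environment is initialised in a fixed reference vector.

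With $U_\Phi:=\bigotimes_v U_{\Phi_v^{(\lambda)}}$ and with $\tilde\sigma$ the pure product state formed by tensoring the purified sources $|\psi_e\rangle$ with the environment reference states, the dilated state $\tilde\rho_\lambda:=U_\Phi\tilde\sigma U_\Phi^\dag$ is pure and marginalises to $\rho_\lambda$. Monotonicity of the QFI under partial trace, the pure-state identity $\mathcal{F}_Q(|\Psi\rangle,\{H_s\})=4\cov(|\Psi\rangle,\{H_s\})$, and unitary invariance of the covariance then give
\[
\tfrac14\mathcal{F}_Q(\rho_\lambda,\{H_s\})\le \cov\bigl(\tilde\sigma,\{\tilde H_s^{(\lambda)}\}\bigr),\qquad \tilde H_s^{(\lambda)}:=U_\Phi^\dag(H_s\otimes I_E)U_\Phi.
\]
Because $U_\Phi$ is a vertex-wise product and $H_s$ acts trivially outside $s$, the pulled-back generator $\tilde H_s^{(\lambda)}$ is supported exactly on the subsystems $\mathcal{H}_v\otimes\mathcal{H}_{E_v}$ with $v\in s$, which is the locality that drives the rest of the argument.

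I then apply Lemma~\ref{thm:separable} to $\tilde\sigma$ with observables $\{\tilde H_s^{(\lambda)}\}$, taking the hyperedge blocks and the vertex-environment blocks as the tensor factors. Since $\tilde H_s^{(\lambda)}$ acts trivially on the purified source $|\psi_e\rangle$ whenever $s\cap e=\emptyset$ and trivially on $\mathcal{H}_{E_v}$ whenever $v\notin s$, the Lemma delivers
\[
\cov\bigl(\tilde\sigma,\{\tilde H_s^{(\lambda)}\}\bigr)=\sum_{e\in\mathcal{E}}\Upsilon^{(e)}_\lambda+\sum_{v\in\mathcal{V}}\Upsilon^{(E_v)}_\lambda,
\]
with both families PSD, $\Upsilon^{(e)}_\lambda$ supported on $\{s:s\cap e\ne\emptyset\}$, and $\Upsilon^{(E_v)}_\lambda$ on the smaller set $\{s:v\in s\}$. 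Picking any hyperedge $e_v\ni v$ (such an $e_v$ exists as long as every sensor is incident to some source), one has $\{s:v\in s\}\subseteq\{s:s\cap e_v\ne\emptyset\}$, so $\Upsilon^{(E_v)}_\lambda$ may be absorbed into $\Upsilon^{(e_v)}_\lambda$ without violating the projector condition $\Pi^{(e)}T^{(e)}\Pi^{(e)}=T^{(e)}$. Defining $T^{(e)}:=\sum_\lambda p_\lambda\bigl(\Upsilon^{(e)}_\lambda+\sum_{v:e_v=e}\Upsilon^{(E_v)}_\lambda\bigr)$ and invoking convexity of QFI yields $\sum_e T^{(e)}\ge\mathcal{F}_Q(\rho,\{H_s\})/4$. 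The diagonal reads $\sum_e T^{(e)}_{ss}=\sum_\lambda p_\lambda\var(\rho_\lambda,H_s)$, which is at most $\var(\rho,H_s)$ by the law of total variance; the non-negative deficit can then be added to the $(s,s)$ entry of a chosen $T^{(e)}$ with $s\cap e\ne\emptyset$ to achieve the stated equality.

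I expect the main obstacle to be step (ii) together with the bookkeeping just described: checking that the Stinespring dilation produces an input state that is exactly a product over hyperedge blocks and vertex-environment blocks, and that every resulting $\Upsilon^{(E_v)}_\lambda$ indeed fits inside a single $\Pi^{(e_v)}$. Once this locality is in place, the theorem follows from Lemma~\ref{thm:separable}, convexity and monotonicity of the QFI, and the pure-state identity $\mathcal{F}_Q=4\cov$.
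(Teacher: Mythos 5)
Your proposal is correct and follows essentially the same route as the paper: Stinespring-dilate the local channels to restore a product structure over hyperedge and vertex(-ancilla) blocks, apply Lemma~\ref{thm:separable}, and fold the vertex blocks into incident hyperedge blocks before invoking the covariance--QFI relation. The only differences are cosmetic --- the paper uses the mixed-state bound $\mathcal{F}_Q\le 4\cov$ directly rather than purifying the sources and combining monotonicity with the pure-state identity, and it splits each vertex term equally among the $c_v$ incident hyperedges via $T^{(e)}=\Upsilon^{(e)}+\sum_{v\in e}\Upsilon^{(v)}/c_v$ instead of assigning it to one chosen edge; your explicit padding of the diagonal entries to turn the post-mixture inequality $\sum_\lambda p_\lambda\var(\rho_\lambda,H_s)\le\var(\rho,H_s)$ into the stated equality is in fact slightly more careful than the paper's treatment.
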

\begin{proof}
By definition, the probe state $\rho$ has the general-form decomposition
\begin{align}
    \rho &= \sum_\lambda p_\lambda \rho_\lambda,\\ 
    \rho_\lambda &= \left(\bigotimes_{v\in \mathcal{V}} \Phi_{v}^{(\lambda)} \right) \left(\bigotimes_{e\in \mathcal{E}} \sigma_{e}\right),\label{eq:nolosr}
\end{align}
where $\Phi_{v}^{(\lambda)}$ is a channel acting on sensor $v$.
If all three (in)equalities in Eq.~\eqref{eq:sdpdec} hold for each $\rho_\lambda$, they hold also for $\rho$ by the convexity of $\mathcal{F}_Q$ and the concavity of variance~\cite{Toth2013Extremal}. Hence, we focus on  $\rho$ of the form \eqref{eq:nolosr} without loss of generality.  

Next, note that each channel $\Phi_v$ has the Stinespring dilation:
\begin{equation}
  \Phi_v(\tau) = \tr_{R(v)}[U_v(\tau\otimes (\omega_{v})_{R(v)})U_v^\dagger], 
\end{equation}
where $\omega_{v}$ is an ancillary state and $U_v$ is a unitary acting on system $v$ and its ancilla $R(v)$.
This implies that
\begin{equation}\label{eq:tilde_rho}
    \rho = \tr_{\mathcal{R}}[U_{\mathcal{V}} \tilde{\rho} U_{ \mathcal{V}}^\dagger],\  \tilde{\rho} = \left(\bigotimes_{e\in \mathcal{E}} \sigma_e\right)\otimes \left(\bigotimes_{v\in \mathcal{V}} \omega_{v}\right),
\end{equation}
where $\mathcal{R}$ denotes the collection of all ancillas and $U_{\mathcal{V}} = \otimes_{v\in \mathcal{V}} U_v$.
Defining $\tilde{H}_s := U_{\mathcal{V}}^\dagger (H_s \otimes \id_{\mathcal{R}}) U_{\mathcal{V}}$, we have $\tr(H_s \rho) = \tr(\tilde{H}_s \tilde{\rho}),\ \tr(H_sH_t \rho) = \tr(\tilde{H}_s \tilde{H}_t \tilde{\rho})$ for any $s,t\in\mathcal{S}$,
which implies that
\begin{equation}\label{eq:coveq}
    \cov(\rho, \{H_s\}) = \cov(\tilde{\rho},\{\tilde{H}_s\}).
\end{equation}

To prove Eq.~(\ref{eq:sdpdec}), we first show that there exists a set of $\{T^{(e)}\}$ that decomposes the covariance matrix $\cov(\rho, \{H_s\})$ and then apply the relation between the covariance matrix and the quantum Fisher information matrix.
Since $\tilde{\rho}$, as defined in Eq.~(\ref{eq:tilde_rho}), is a product state,  Lemma~\ref{thm:separable} combined with Eq.~(\ref{eq:coveq}) imply the decomposition
\begin{align}\label{eq:covdecthm}
     \cov(\rho, \{H_s\})  = \sum_{e\in \mathcal{E}} \Upsilon^{(e)} + \sum_{v\in \mathcal{V}} \Upsilon^{(v)}, 
\end{align}
where $\Upsilon^{(e)}$ and  $\Upsilon^{(v)}$ satisfy the properties in Eq.~\eqref{eq:covdec}. 
We further define 
\begin{align}
T^{(e)} := \Upsilon^{(e)} + \sum_{v\in e} \Upsilon^{(v)}/c_v,
\end{align}
where $c_v$ is the number of hyperedges containing the vertex $v$. By definition, the set $\{T^{(e)}\}_{e\in \mathcal{E}}$ satisfies the third equality in Eq.~\eqref{eq:sdpdec}. In addition, as 
\begin{align}
 \cov(\rho, \{H_s\})=\sum_{e\in\mathcal{E}} T^{(e)},
\end{align}
the first inequality in Eq.~\eqref{eq:sdpdec} follows from the relation between the quantum Fisher information matrix and the covariance matrix: $\mathcal{F}_Q(\rho, \{H_s\}) \le4\cov(\rho, \{H_s\}) $ \cite{gessner2018sensitivity}, and the second equality follows from the definition of the covariance matrix.  
\end{proof}

Let us now consider the task of estimating a parameter that is an arbitrary linear combination of the local parameters:
\begin{align}
\theta(\vec{\alpha}):=\vec{\alpha}^T\vec{\theta}=\sum_s \alpha_s\theta_s.
\end{align}
The mean squared error can be expressed as
\begin{align}\label{mse-theta}
\Delta^2\left(\theta(\vec{\alpha})\right)=\vec{\alpha}^T\cov\left(\{\theta_s\}\right)\vec{\alpha},
\end{align}
where $\cov\left(\{\theta_s\}\right)$ is the covariance matrix of the parameters $\{\theta_s\}$. 
Defining $\tilde{H}_s:=\alpha_s H_s$, it is immediate that $\vec{\alpha}^T \mathcal{F}_Q(\rho,\{H_s\})\vec{\alpha}$ equals the sum of all entries in the matrix $\mathcal{F}_Q(\rho,\{\tilde{H}_s\})$.
Applying Lemma~\ref{thm:separable} to $\{\tilde{H}_s\}$, we get
\begin{align}
    \vec{\alpha}^T \mathcal{F}_Q(\rho,\{H_s\})\vec{\alpha} &\le 4 \sum_{st} \sum_e \tilde{T}^{(e)}_{st},
\end{align} 
where $\{\tilde{T}^{(e)}\}_{e\in\mathcal{E}}$ are positive semi-definite matrices satisfying Eq.~(\ref{eq:sdpdec}) (with $\{H_s\}$ substituted by $\{\tilde{H}_s\}$).
The inequality can be further expanded as
\begin{align}
   \vec{\alpha}^T \mathcal{F}_Q(\rho,\{H_s\})\vec{\alpha}  &\le 4 \sum_e \sum_{st} \sqrt{\tilde{T}^{(e)}_{ss} \tilde{T}^{(e)}_{tt}}\\
   & \le 4 \sum_e \sum_{st} \frac{1}{2} \left( \tilde{T}^{(e)}_{ss} + \tilde{T}^{(e)}_{tt} \right) \\
    &\le 4 \sum_e  k^{(e)} \sum_{s} \tilde{T}^{(e)}_{ss}\\
    &\le 4 \sum_s k_s \sum_e \tilde{T}^{(e)}_{ss}\\
    &= 4 \sum_{s} k_s \var(\rho,\tilde{H}_s)\\
    &= 4 \sum_{s} k_s \alpha_s^2\var(\rho,H_s),
\end{align}
where $k^{(e)}$ is the
number of $s$ for which $\tilde{T}^{(e)}_{ss} > 0$
and 
$k_s = \max_{e\in \mathcal{E}, e\cap s\neq \emptyset} |\{t \in \mathcal{S} |t\cap e\neq \emptyset\}|$.  Note that when all the Hamiltonians act locally without interaction, $k_s$ is the maximal size of edge containing the node $s$.
The first inequality follows from  $\tilde{T}^{(e)} \geq 0$, the second from the arithmetic--geometric mean inequality, and the next two steps from the definitions of $k^{(e)}$ and $k_s$.
The final two lines are from the second equation in \eqref{eq:sdpdec} and the definition of $\tilde{H}_s$.
The last inequality can be written as
\begin{align}
    \vec{\alpha}^T \mathcal{F}_Q \vec{\alpha} & \leq \vec{\alpha}^T \Phi(V) \vec{\alpha}, \\
    \Phi(V) & :=  \diag \{ 4k_s \var(\rho,H_s) \}_s.
\end{align}
As this holds for all $\vec{\alpha}$, we have $\mathcal{F}_Q \le \Phi(V)$.
Taking the inverse,
\begin{align}
    \mathcal{F}_Q^{-1} \ge \Phi(V) ^{-1} \ge \diag \left\{ \frac{1 }{4k_s \var(\rho,H_s)} \right\}_s,
\end{align}
using the fact that the function $f(t) = -t^{-1}$ is operator monotone.
The matrix Cram\'er-Rao bound \cite{helstrom1969quantum,holevo2011probabilistic} then implies
\begin{align}
   \cov\left(\{\hat{\theta}_s\}\right) \ge (1/\nu)\diag \Phi(V)^{-1} = \diag \left\{ \frac{1 }{4\nu k_s \var(\rho,H_s)} \right\}_s
\end{align}
where $\nu$ is the number of repetitions. For a given $\vec{\alpha}$, this results in the bound
\begin{align}
    \Delta^2\left(\theta(\vec{\alpha})\right)=\vec{\alpha}^T\cov\left(\{\hat{\theta}_s\}\right)\vec{\alpha} \geq \sum_s \frac{ \alpha_s^2}{4\nu k_s \var(\rho,H_s)}.
\end{align}

\medskip
\noindent{\it Proof of Lemma \ref{thm:separable}.}~
We shall prove the lemma by explicitly constructing $\{\Upsilon^{(k)}\}$. Define
\begin{align} 
 & \cov(\rho,\mathcal{A}) = \sum_{k=1}^K \Upsilon^{(k)},\\ 
 & \Upsilon^{(k)}_{ij} = \tr_{k\to}[(A_i^{(k-1)} - \id_k \otimes A_i^{(k)}) (A_j^{(k-1)} - \id_k \otimes A_j^{(k)}) \rho_{k\to} ]
\end{align}
where $k\to := (k,\ldots,K)$, $\ot k := (1,\ldots,k)$,
\begin{align}\label{eq:decomposition_procedure}
  &A_i^{(k)} = \tr_{\ot k} (A_i\rho_{\ot k}),\ A_i^{(0)} = A_i,\nonumber\\
  &\rho_{S} = \otimes_{t \in S} \sigma_t,\ \rho_{K+1\to} = 1.
\end{align}

Since $\tr[A_j^{(k-1)} \rho_{k\to}]=\tr[(\id_k \otimes A_j^{(k)}) \rho_{k\to}]=\tr[A_j\rho]$, $\Upsilon^{(k)}$ is the covariance matrix of a set of observables $\{A_j^{(k-1)} - \id_k \otimes A_j^{(k)}\}$. Therefore, we have 
\begin{align}
\Upsilon^{(k)}\ge0.
\end{align}

Next, we prove that these matrices sum up to the covariance matrix.
Direct calculation shows that
\begin{align}\label{eq:transform}
  \Upsilon^{(k)}_{ij} =& \tr[(A_i^{(k-1)} - \id_k \otimes A_i^{(k)}) (A_j^{(k-1)} - \id_k \otimes A_j^{(k)}) \rho_{k\to} ]\\
  =& \tr[A_i^{(k-1)} A_j^{(k-1)} \rho_{k\to} ] - \tr[(\id_k \otimes A_i^{(k)})A_j^{(k-1)} \rho_{k\to} ] - \tr[A_i^{(k-1)} (\id_k \otimes A_j^{(k)}) \rho_{k\to} ]\\ 
                      &+ \tr[(\id_k \otimes A_i^{(k)}) (\id_k \otimes A_j^{(k)}) \rho_{k\to} ]\\
  =& \tr[A_i^{(k-1)} A_j^{(k-1)} \rho_{k\to} ] - \tr[A_i^{(k)} A_j^{(k)} \rho_{k+1\to} ],
\end{align}
where the first equality is from the definition, the second equality is from expansion, and the third equality is from the facts that
\begin{align}
  \tr_{k}[(\id_k \otimes A_i^{(k)}) A_j^{(k-1)} \rho_{k\to} ] = A_i^{(k)}A_j^{(k)} \rho_{k+1\to},\\
  \tr_{k}[A_i^{(k-1)} (\id_k \otimes A_j^{(k)}) \rho_{k\to} ] = A_i^{(k)}A_j^{(k)} \rho_{k+1\to},\\
  \tr_{k}[(\id_k \otimes A_i^{(k)}) (\id_k \otimes A_j^{(k)}) \rho_{k\to} ] = A_i^{(k)}A_j^{(k)} \rho_{k+1\to},
\end{align}
having used the following property of the partial trace: $
  \tr_X[(A_X\otimes A_Y)\rho_{XY}] = A_Y \tr_X[(A_X\otimes \id_Y) \rho_{XY}]$.
From Eq.~\eqref{eq:transform}, we have
\begin{align}
  \sum_{k=1}^K \Upsilon^{(k)} &= \tr[A_i^{(0)} A_j^{(0)} \rho_{1\to} ] - \tr[A_i^{(K)} A_j^{(K)} \rho_{K+1\to} ]\\
                              &= \tr[A_i A_j \rho ] - A_i^{(K)} A_j^{(K)} \\
                              &= \cov(\rho,\mathcal{A}),
\end{align}
where the first equality is from direct summation, the second equality is by definition, the third equality is from the fact that $A_i^{(K)} = \tr(A_i \rho)$.

If $A_i$ acts on the $k$-th party trivially, i.e., $A_i = \id_k \otimes \tilde{A}_{i,\bar{k}}$ for some $\tilde{A}_{i,\bar{k}}$, then
\begin{align}
A_i^{(k-1)} &= \tr_{\ot k-1} (A_i\rho_{\ot k-1})\nonumber\\
            &= \tr_{\ot k-1} [(\id_k\otimes\tilde{A}_{i,\bar{k}}) \rho_{\ot k-1}]\\
            &= \id_k\otimes\tr_{\ot k-1} [\tilde{A}_{i,\bar{k}} \rho_{\ot k-1}]\\
            &= \id_k\otimes\tr_{\ot k} [(\id_k\otimes\tilde{A}_{i,\bar{k}}) (\sigma_k\otimes\rho_{\ot k-1})]\\
            &= \id_k\otimes\tr_{\ot k} [A_i \rho_{\ot k}]\\
            &= \id_k\otimes A_i^{(k)}.
\end{align}
This immediately leads to $\Upsilon_{ij}^{(k)} = 0$ when at least one of $\{A_i,A_j\}$ acts trivially on the subsystem $k$.
\qed

\section{Bounds for shallow circuits when the gates contain the parameter of interest} 

A similar setting involves a circuit where the gates themselves now contain the parameter of interest.
That is, we take gates $U_{j,\alpha}(\theta)$ such that $\dot U_{j,\alpha}(\theta) := \partial_\theta U_{j,\alpha}(\theta) = -i H_{j,\alpha} U_{j,\alpha}(\theta)$.
To simplify things, let us define $U_j = \otimes_\alpha U_{j,\alpha}$ and $H_j = \sum_\alpha H_{j,\alpha}$.
First note that
\begin{align}
    i\dot U & =  U_D \dots U_2 H_1 U_1 + U_D \dots U_3 H_2 U_2 U_1 + \dots + H_D U_D \dots U_1 \nonumber \\
        & = \sum_{j=1}^D U_{>j} H_j U_{\leq j} \nonumber \\
        & = \sum_{j=1}^D U_{>j} H_j U_{>j}^\dagger U_{>j} U_{\leq j} \nonumber \\
        & = H' U,
\end{align}
where $H' := \sum_j U_{>j} H_j U_{>j}^\dagger$.
Therefore, the derivative of the output state $\sigma = U\rho U^\dagger$ is $\dot \sigma = \dot U \rho U^\dagger + U \rho \dot U^\dagger = -i(H' U\rho U^\dagger - U\rho U^\dagger H') = -i[H',\sigma]$.
Using the Heisenberg picture, we have $\mathcal{F}_Q(\sigma,H') = \mathcal{F}_Q(\rho, \tilde H)$, with
\begin{align}
    \tilde H & := U^\dagger H' U \nonumber \\
        & = \sum_{j=1}^D U^\dagger U_{>j} H_j U_{>j}^\dagger U \nonumber \\
        & = \sum_{j=1}^D U_{\leq j}^\dagger H_j U_{\leq j} \nonumber \\
        & =: \sum_{j=1}^D \tilde{H}_j.
\end{align}
Overall, the relevant Hamiltonian $\tilde{H}$ thus has terms $U_{\leq j}^\dagger H_{j,\alpha} U_{\leq j}$ whose locality is bounded by the structure and depth of the circuit.
Again taking a one-dimensional chain with 2-site nearest neighbor interactions and an initial separable state, we first bound the number of terms $\tilde{H}_{j',\alpha'}$ that overlap with a given $\tilde{H}_{j,\alpha}$.
The weight of $\tilde{H}_{j,\alpha}$ is at most $q_j \leq 2(j+1)$ (over an interval of at most this size).
The number of overlapping terms from layer $j'$ is thus at most $q_j + q_{j'} - 1 \leq 2j + 2j' +3$.
Summing over all $j'$, we have
\begin{align}
    \text{number overlapping } \tilde{H}_{j,\alpha} & \leq \sum_{j'=1}^D 2j + 2j' + 3 \nonumber \\
        & = 2Dj + D(D+1) + 3D \nonumber \\
        & = 2Dj + D^2 + 4D.
\end{align}
Therefore, the QFI is bounded by
\begin{align}
    \mathcal{F}_Q(\rho, \tilde{H}) & \leq 4 \sum_{j=1}^D D(2j+D+4) \sum_\alpha V(\rho, \tilde{H}_{j,\alpha}).
\end{align}

\section{Achieving the Heisenberg limit with post-selection}
Consider a generic connected hypergraph $G=(V,E)$. We assume without loss of generality that $G$ is simple, i.e., every edge is unique, and does not have any self-loop, i.e., an edge containing only one vertex, which corresponds to a trivial, non-entangling source.

In Protocol \ref{protocol-general}, observe that the operations starting from Step 2 (center election and local pre-processing) are done locally and their order of execution within each sensor is arbitrary (once $\{\tilde{\alpha}_j\}$ are distributed). 
For convenience of our proof, we consider the operations (signal acquisition, measurement, and post-selection) sensor by sensor, and delay all operations at the center to the end.

Suppose a non-center sensor $v_0$ is measured first. Conditioning on the post-selection being successful, the signal of $v_0$ will be passed to an entangled state shared by the neighbors of $v_0$ (i.e., sensors entangled with $v_0$), which we call the signal state.
We denote by $S$ the sensors who have finished their local measurements and by $\tilde{S}\subset V$ the sensors sharing the signal state. 
At this stage, we have $S=\{v_0\}$ and $\tilde{S}=V_{\rm neigh}(v)$, where
\begin{align}
    V_{\rm neigh}(v):=\left\{v'\in V~: (v, v')\in E\right\}.
\end{align}

Moving on, as long as there exists a non-center $v\in\tilde{S}$, we shall proceed by considering the measurement of $v$.
It can be shown that, conditioning on the measurement being successful, we have:
\begin{itemize}
    \item the set of sensors who completed their local measurements is updated to $S'=S\cup\{v\}$;
    \item the signal state is now an entangled state (i.e., a GHZ state carrying a phase) shared by $\tilde{S}'=(\tilde{S}\setminus\{v\}) \cup (V_{\rm neigh}(v)\setminus S)$.
    \item denoting a qubit of the network state by its site $v\in V$ and source $e\in E$ as $[e,v]$, the signal state at this stage becomes $|\Psi(Q_{S',\tilde{S}'},S')\>$, where
\begin{align}
    |\Psi(Q,S)\>:=\frac{1}{\sqrt{2}}\left(|00\cdots 0\>_{Q}+e^{i\sum_{j:v_j\in S}\tilde{\alpha}_j\theta_j}|11\cdots1\>_{Q}\right).
\end{align}
Here $|00\cdots 0\>_Q,|11\cdots 1\>_Q$ are states on qubits in the set qubits $Q$, and 
\begin{align}
    Q_{S,\tilde{S}}:=\left\{[e,v]\in E\times\tilde{S}~:~\exists v'\in S,\ \text{s.t.}\  (v,v')= e\right\}.
\end{align}
\end{itemize} 

We show the above claim by induction.
Define the set of edges $$E_S(v):=\left\{e\in E~:~v\in e,(S\setminus \{v\})\cap e=\emptyset\right\}.$$ Right before the measurement in Step 5 (Measurement and post-selection), 
the first sensor $v_0$ holds plus states that carry exactly $|\tilde{\alpha}_{v_0}|$ queries to the local signal $\theta_{v_0}$ of $v_0$, and one qubit of the GHZ state $|\GHZ_e\>$ for every $e\in E_S(v_0)$.
Straightforward calculation shows that, when $\tilde{\alpha}_{v_0}>0$, measuring all the above states and post-selecting on the GHZ state of all qubits would entangle all other qubits  in $|\GHZ_e\>$ that are not held by $v_0$ for every $e\in E_S(v_0)$, creating a GHZ state that carries a phase $\tilde{\alpha}_{v_0}\theta_{v_0}$. When $\tilde{\alpha}_{v_0}<0$, applying $X$ to each plus state will effectively flip the sign of the phase and thus ensure that the global phase  is of the correct sign.

At an intermediate stage, for any $S,\tilde{S}$ and any non-center $v\in\tilde{S}$, we have a similar argument:
Locally, the sensor $v$ holds the signal state $|\Psi(Q_{S,\tilde{S}},S)\>$,  plus states that carry exactly $|\tilde{\alpha}_v|$ queries to the local signal $\theta_v$ of $v$, and one qubit of the GHZ state $|\GHZ_e\>$ for every $e\in E_S(v)$.
It is straightforward that, when $\tilde{\alpha}_v>0$, measuring all the above states and post-selecting on the GHZ state of all qubits would pass an overall phase $\tilde{\alpha}_v\theta_v$ to $|\Psi(Q_{S,\tilde{S}},S)\>$, entangle it to $|\GHZ_e\>$ for every $e\in E_S(v)$ while disentangling it with all qubits of $v$. When $\tilde{\alpha}_v<0$, applying $X$ to each plus state will effectively flip the sign of the phase and thus ensure that the global phase $\tilde{\alpha}_v\theta_v$ is passed to the signal state with the correct sign.
Either way, the new signal state is $|\Psi(Q_{S',\tilde{S}'},S')\>$, with $S'$ and $\tilde{S}'$ being the updated $S$ and $\tilde{S}$, respectively, as described by the rules.

One can see that the above procedure converges to $S=V\setminus\{v^\ast\}$ and $\tilde{S} = \{v^\ast\}$. Otherwise, there are two possible cases:
\begin{enumerate}
    \item The procedure terminates with $\tilde{S}=\emptyset$.
    \item The procedure terminates with $\tilde{S}=\{v^\ast\}$ and $S\cup\tilde{S}\not=V$.
\end{enumerate}
If the first case is true, it means that $v^\ast$ has never been added to $\tilde{S}$ and $S$. Since $G$ is connected, there is a path from $v_0$ to $v^\ast$. Since $v_0\in S$ and $v^\ast\not\in S$, there exists two adjacent vertices $v_1$ and $v_2$ along the path such that $v_1\in S$ and $v_2\not\in S$. However, when $v_1$ was measured, all its neighbors, including $v_2$, were added to $\tilde{S}$. If $v_2\not=v^\ast$ and $v_2\in\tilde{S}$, it would be measured and added to $S$. The only exception is $v_2=v^\ast$, which reduces to the second case.

If the second case is true, there exists $v\in V\setminus (S\cup\tilde{S})$. Since $v^\ast$ is not a cut-vertex, there exists a path connecting $v$ with $v_0$ that does not involve $v^\ast$. As $v\not\in S$ and $v_0\in S$, there exist two adjacent vertices $v_1$ and $v_2$ along the path such that $v_1\in S$ and $v_2\not\in S$. Again, when $v_1$ was measured, all its neighbors, including $v_2$, were added to $\tilde{S}$. Since $v_2\not=v^\ast$ and $v_2\in\tilde{S}$, it should have been measured later and added to $S$, and thus we have a contradiction with the assumption that $v_2\not\in S$. In conclusion, neither of the above two cases is possible.

When the procedure converges, the signal state becomes $|\Psi(Q_{V\setminus\{v^\ast\},\{v^\ast\}},V\setminus\{v^\ast\})\>$, a GHZ state at $v^\ast$ that carries a global phase 
$
\sum_{j:v_j\not=v^\ast}\tilde{\alpha}_j\theta_j$. 
By Step 3 (Signal acquisition) and Step 5 (Measurement and post-selection), the center $v^\ast$ holds only $|\Psi(Q_{V\setminus\{v^\ast\},\{v^\ast\}},V\setminus\{v^\ast\})\>$. 
After passing any of the qubits through   $e^{-i\tilde{\alpha}^*\theta^*Z/2}$ (or $Xe^{-i\tilde{\alpha}^*\theta^*Z/2}X$ if $\tilde{\alpha}_j<0$), the signal state becomes a GHZ state carrying the desired global phase $M\theta$. 
The probability of getting $|\GHZ^*\>\<\GHZ^*|$ when measuring it with $\{|\GHZ^*\>\<\GHZ^*|,I-|\GHZ^*\>\<\GHZ^*|\}$ is thus given by
\begin{align}
    P_\theta  = \cos^2(M\theta/2).
\end{align}
The Fisher information is $4M^2$, and thus the HL is achieved, conditioned on the success of the protocol.

\section{Success probability of Protocol \ref{protocol-general}}\label{app:successprob}
  We remark that the success of the complete protocol has the probability $p_{{\rm succ}}$, where
\begin{equation}
    \log_2 p_{{\rm succ}} = -\Big[\sum_{v\neq v^\ast} |\tilde{\alpha}_j| + \sum_{e\in \mathcal{E}} |e| - |\mathcal{E}(v^\ast)| \Big],
\end{equation}
where $\mathcal{E}(v^\ast)$ is the set of edges containing $v^\ast$, $|e|$ is the number of vertices contained in $e$.

First, the probability of success at the first step is $2^{-n_{v_0}}$, where $n_{v_0}$ is the number of particles which have been measured after all. To be more precise, 
the state of the particles related to the node $v_0$ before the measurement is
\begin{equation}
    \Big(\bigotimes_{e\in \mathcal{E}(v_0)} |{\rm GHZ}\rangle_e\Big) \otimes |+_{\theta_{v_0}}\rangle^{\otimes |\tilde{\alpha}_{v_0}|} = 2^{-(|\mathcal{E}(v_0)| + |\tilde{\alpha}_{v_0}|)} \Big( |\mathbf{0}00\ldots 0\rangle + e^{i\tilde{\alpha}_{v_0} \theta_{v_0}} |\mathbf{1}11\ldots 1\rangle +\ \text{rest terms} \Big)
\end{equation}
in the case that $\tilde{\alpha}_{v_0} > 0$, where $\mathcal{E}(v_0)$ is the set of edges containing vertex $v_0$ in the network, $|\mathbf{0}\rangle = |00\ldots 0\rangle$ is the state for the qubits distributed to other vertices, etc. The remaining terms do not contain $|00\ldots 0\rangle, |11\ldots 1\rangle$ for the qubits belonging to the node $v_0$. Hence, the probability of success to project this state onto the GHZ state $(|00\ldots 0\rangle + |11\ldots 1\rangle)/\sqrt{2}$ for the qubits belonging to the node $v_0$ is $2^{-n_{v_0}}$ with $n_{v_0} = |\mathcal{E}(v_0)| + |\tilde{\alpha}_{v_0}|$ to be the number of qubits  measured on the node $v_0$. This result follows also similarly for the case where $\tilde{\alpha}_{v_0} < 0$.

After the measurement at each step and the post-selection on the successful case, we obtain in fact a new network as discussed before, where some GHZ states carry nontrivial phases. 
To be more explicit, 
the state of the particles related to the node $v_0$ before the next measurement is
\begin{equation}
    \Big(\bigotimes_{e\in \bar{\mathcal{E}}(v)} |{\rm GHZ}_{\theta_e}\rangle_e\Big) \otimes |+_{\theta_{v}}\rangle^{\otimes |\tilde{\alpha}_{v}|} = 2^{-(|\bar{\mathcal{E}}(v)| + |\tilde{\alpha}_{v}|)} \Big( |\mathbf{0}00\ldots 0\rangle + e^{i \hat{\theta}_v} |\mathbf{1}11\ldots 1\rangle +\ \text{rest terms} \Big),
\end{equation}
where $|{\rm GHZ}_{\theta_e}\rangle = (|00\ldots 0\rangle + e^{i\theta_e}|11\ldots 1\rangle)/\sqrt{2}$, $\hat{\theta}_v = \tilde{\alpha}_v \theta_v + \sum_{e\in \bar{\mathcal{E}}(v)} \theta_e$, and $\bar{\mathcal{E}}(v)$ is the set of edges related to node $v$ in the new network.
One essential point is that $\cup_{e\in \bar{\mathcal{E}}_S(v)} e$ contains qubits for other vertices also, according to our assumption that $v^\ast$ is not a cut-vertex. 
Similarly, the probability of success for an intermediate step corresponding to vertex $v$ is  no less than  $2^{-n_v}$, where $n_v$ is the number of qubits measured on the node $v$. The probability is exactly $2^{-n_v}$ when none of those qubits are in the same GHZ state. 
Notice that the measurement and post-selection on other vertices do not affect the number $n_v$, we have $n_v = |\mathcal{E}(v)| + |\tilde{\alpha}_v|$.

Consequently, the probability $p_{{\rm succ}}$ of the success of the complete protocol satisfies $0 \le -\log_2 p_{{\rm succ}} \le  \sum_{v\neq v^\ast} n_v$, thus,
\begin{equation}
    \log_2 p_{{\rm succ}}  \ge -\sum_{v\neq v^\ast} n_v = -\Big[\sum_{v\neq v^\ast} |\tilde{\alpha}_v| + \sum_{e\in \mathcal{E}} |e| - |\mathcal{E}(v^\ast)| \Big],
\end{equation}
where $\mathcal{E}(v^\ast)$ is the set of edges containing $v^\ast$, and $|e|$ is the number of vertices contained in $e$.

\section{Privacy of the probabilistic protocol}

Consider a hypergraph $G$ such that $G\setminus\{v^\ast\}$ is connected. Here we show that the local parameters $\{\theta_j\}$ will not be disclosed to the center regardless of the outcome of the post-selection. That is, denoting by $S\subset V\setminus\{v^\ast\}$ the set of vertices yielding ``yes" (successful, and other vertices are unsuccessful), the (unnormalized) state 
\begin{align}\label{rho-S}
\rho(S):=\Tr_{\overline{v^\ast}}\left[\rho_{\vec{\theta}}\left(\bigotimes_{v_k\not\in V\setminus(S\cup\{v^\ast\})}(I-|\GHZ_k\>\<\GHZ_k|)\otimes\bigotimes_{v_j\in S}|\GHZ_j\>\<\GHZ_j|\right)\right]
\end{align}
depends only on $\theta(\vec{\alpha})$ and does not depend explicitly on each $\{\theta_j\}$.
When $S=V\setminus\{v^\ast\}$, i.e., when the whole protocol is successful, we have shown that the conditional state at the center is a GHZ state with phase proportional to $\theta$, and the probability of success is independent of $\{\theta_j\}$. What remains are the cases when $S\neq V\setminus\{v^\ast\}$.

Denote by $\rho_{\vec{\theta}}$ the global state after the signals are acquired.
Observe that tracing out $v\in \mathcal{V}$ would decohere any $e\in \mathcal{E}$ incident to it (i.e., $v\in e$) to $(|0\cdots 0\>\<0\cdots 0|+|1\cdots1\>\<1\cdots1|)/2$. Using this property recursively, for any $S\subsetneqq V\setminus\{v^\ast\}$, we get
\begin{align}
\Tr_{\overline{v^\ast}}\left[\rho_{\vec{\theta}}\left(I_{V\setminus(S\cup\{v^\ast\})}\otimes\bigotimes_{v_j\in S}|\GHZ_j\>\<\GHZ_j|\right)\right] = 2^{-\sum_{v_j\in S} (|\mathcal{E}(v_j)| + |\tilde{\alpha}_j|)}\pi_{S},
\end{align}
  where $\mathcal{E}(v)$ is the set of edges containing $v$,
$\Tr_{\overline{v^\ast}}$ denotes tracing out all vertices but the center,
and $\pi_S$ is a (normalized) diagonal mixed state at the center in the energy eigenbasis of $v^\ast$ (which is, as a result, independent of the signal).
Substituting into Eq.~(\ref{rho-S}), we get
\begin{align}
\rho(S)=\sum_{S':S\subset S'}(-1)^{|S'|-|S|}2^{-\sum_{v_j\in S'} (|\mathcal{E}(v_j)| + |\tilde{\alpha}_j|)}\pi_{S'}.
\end{align}
Therefore, $\rho(S)$ is independent of $\{\theta_j\}$ for any $S$.



\end{document}